\begin{document}
	
	\label{'ubf'}  
\setcounter{page}{1}                                 

\markboth {\hspace*{-9mm} \centerline{\footnotesize \sc
            Label Languages of 8-directional Array P-
System  }
                 }
                { \centerline                           {\footnotesize \sc  
            Kalpana Mahalingam, Raghavan Rama and Williams Sureshkumar } \hspace*{-9mm}              
               }

\vspace*{-2cm}
\begin{flushleft}
    {
    }
           \vspace*{0.3cm}       
\end{flushleft}

\begin{center}
{ 
       {\Large \textbf { \sc Label Languages of 8-directional Array P-
System
                               }
       }
\\

\medskip

{\sc Williams Sureshkumar$^{1}$, Kalpana Mahalingam$^{2}$, Raghavan Rama$^{2}$}\\ 

{\footnotesize $^{1}$Department of Computer Science and Engineering, }\\
{\footnotesize Saveetha University,  Chennai-602105. India
}\\
{\footnotesize $^{2}$Department of Mathematics, }\\
{\footnotesize Indian Institute of Technology, Madras,  Chennai-600036. India
}\\
{\footnotesize e-mail: {\it wisureshkumariit@gmail.com, kmahalingam@iitm.ac.in, ramar@iitm.ac.in}}
}
\end{center}

\thispagestyle{empty}

\hrulefill

\begin{abstract}
An 8-directional array P system is one where the rewriting of an array can happen in any 8-directions. The array rules of such a system are labelled thus resulting in a labelled 8-directional array P system. The labelling is not unique and the label language is obtained by recording the strings over the labels used in any terminating  derivation of the P system. The system is shown to generate interesting pictures. The label language is compared with Chomsky hierarchy. 

\end{abstract}

\section{Introduction}
A P system is a new computing model abstracting the biological happening in membranes. Hence the system is also called \textit{membrane system}. This computing model has several variants \cite{Gh.Paun-Rozenberg-Salomaa-2010}. Most P system variants are computationally universal exhibiting the power of the systems. Recently, there were several research papers on array P systems. The arrays considered in \cite{Rama-krishna-Kamala-2001} are set to evolve in a P system to generate various pictures. In \cite{Ceterchi-Madhu-Paun-Subramanian-2003} the authors set a collection of pictures made up of symbols to evolve using array rules mostly of isotonic type. The key idea was the construction of array language of halting P system. Hence it is interesting to look for the nature of P systems which have well defined halting configurations. Ginsburg and Spanier introduced controlled grammars, where the rules were applied in a restricted manner. The rules in any context-free grammar $G$ were labelled uniquely by distinct symbols of a set $C$ and the Context-free Grammar $($CFG$)$ $G$ is regarded as a mechanism for translating strings over $C$ into a language $L(G, C)$ of strings. Every   
member of $L(G, C)$ is generated by a sequential application of rules of $G$ and the sequence being labelled by words over $C$. Associating a 'label' string with a halting computation of a P system is defined in \cite{Kamala-Paun-Ajeesh-Kamala-2013}. The authors in this paper(\cite{Kamala-Paun-Ajeesh-Kamala-2013}) realised the difficulty in associating labels with the rules in a parallel system and have overcome the difficulty by assigning the same label to more than rule. Hence they named the strings over the labels as control strings rather than label strings.

\par In \cite{Kamala-Paun-Ajeesh-Kamala-2013} the authors looked at the control string associated with a computation of a P system with multi set of objects. The study of control languages of Tissue P-systems is initiated in \cite{pan-tissue}. In \cite{Suresh-Rama-2015} the authors looked at the regulating evolution of an isotonic array P system where the evolution rules were either regular isotonic or context-free isotonic as defined in \cite{Rosenfeld-1971}. The authors in this paper(\cite{Suresh-Rama-2015}) also introduce a new type of isotonic rule called \textit{restricted monotonic} type which is different from the array rules used in \cite{Ceterchi-Madhu-Paun-Subramanian-2003}. Some of the interesting P systems which use arrays as data structures can be seen in \cite{Freund-2000,Kamala-Anindya-1989}.

\par In \cite{suresh-8dp} the authors defined new array systems  called \textit{8-directional array grammar} and \textit{8-directional array P-System}. The interpretation and manipulation of the data structure `string' will be like `turtle-like' graphs with possibilities to turn in multiples of 45 degrees. This array P system naturally can be seen to produce several interesting arrays of both rectangular and non rectangular in nature. Hence,  the authors in \cite{suresh-8dp} compare  8-directional array P System with the power of existing  array models.

 In this paper, we are interested in looking at the evolution of labelled array P system. In such a system , every array rule will be labelled. The labelling need not be unique. A label string is one in which the label symbols of the rules applied are concatenated in sequence and the associated derivation should be an halting one. Collection of such label strings form a label language.

\par In section 2, we give two definitions. They are 8-directional Array Grammar and 8-directional Array P System ($8dAPS$). In section 3, we define Labelled 8-directional Array P System ($L8dAPS$). A label string over the labels of  the array P systems always meets a halting configuration. The set of such label string makes a label language. The set of arrays in the halting configuration may be some interesting pictures which other array systems could not generate. In section 4, we compare the label language with Chomsky hierarchy. We provide some comments and future direction research in section 5.

\section{8-Directional Array P Systems}

An array grammar called  8-directional array grammar is defined in this section. An array P system that works with arrays as data structure and 8-directional array rules as control structure is called a 8-directional array P system. It is denoted by $8dAPS$.

\begin{definition}\cite{Suresh-Rama-cmc-2015} An 8-directional array grammar is defined to be a quadruple $G = (N,~ T,~ P,~ S)$, where
\begin{enumerate}
\item $N$ is a finite non-empty set of symbols called non-terminals.

\item $T$ is a finite non-empty set of symbols called terminals such that it is disjoint from $N$ i.e. $N \cap T$ = $\emptyset$.

\item $P$ is a finite non-empty set of $\theta$-rotation rules of the form
\begin{center}
$A\rightarrow \beta^{\theta}$
\end{center}
\begin{center}
or
\end{center}

\begin{center}
~~~~~~~~~~~~~~~~~~$\alpha^{\theta}\rightarrow \beta^{\theta},~~ 2 \leq \left|\alpha\right|\leq \left|\beta\right|$,
\end{center}

\noindent where $A \in N$, $\alpha, \beta \in (N \cup T)^{+}$, $\alpha$ contains exactly one non-terminal symbol and all other symbols in $\alpha$ are terminals, $\theta\in\Big{\{}0, \frac{\pi}{4}, \frac{\pi}{2}, \frac{3\pi}{4}, \pi, \frac{5\pi}{4}, \frac{3\pi}{2}, \frac{7\pi}{4} \Big{\}}$. While applying the former type of rule, $A$ is rewritten by $\alpha$ in the direction of $\theta$ such that the leftmost symbol of $\alpha$ is placed in the position of $A$. For the later type of rule, $\alpha$ is replaced by $\beta$ in the direction of $\theta$ such that the first symbol of $\beta$ is placed in the position of the first symbol of $\alpha$.

\item $S \in N$ is the start symbol.
\end{enumerate}
\end{definition}

\begin{remark} 
For the rules of the form $A\rightarrow \beta^{\theta}$, the symbols following $A$ are to be shifted by $\left|\beta - 1\right|$ positions in the direction of $\theta$. For the rules of the form $\alpha^{\theta}\rightarrow \beta^{\theta}$ the symbols following $\alpha$ are to be shifted by $\left|\beta \right| - \left|\alpha \right|$ positions in the direction of $\theta$.
\end{remark}

\begin{example} The rule $\begin{array}{c}
X
\end{array}  \rightarrow \begin{array}{c}
(aXY)^{\frac{3\pi}{4}}
\end{array}$ means, while applying the rule to any string of the form $\alpha X \beta$, the resultant array will be
\begin{center}
$\begin{array}{cccc}
 Y  &         &   &          \\
    &   X     &   &           \\
    &\alpha~  & a & ~\beta 
\end{array}$
\end{center}
 \noindent To the above array, if we apply the rule $(aX)^{\frac{3\pi}{4}} \rightarrow (bcdZ)^{\frac{3\pi}{4}}$, the resultant array will be
\begin{center}
$\begin{array}{cccccc}
     Y    &   &        &           &   &   \\
          & Z &        &           &   &    \\
          &   &   d    &           &   &    \\
          &   &        &    c      &   &    \\
          & b &        & \alpha~   & b & ~\beta
\end{array}$ 
\end{center}
\end{example}

\begin{definition}\cite{Suresh-Rama-cmc-2015}\label{def4} An 8-directional Array P System ($8dAPS$) of degree $m (\geq1)$ is a construct 
\begin{center}
$\Pi = (V,~ T,~ \mu,~I_1, \dots , I_m,~ (R_1, \rho_1), \dots , (R_m, \rho_m),~ i_o)$
\end{center}
where $V$ is the total
alphabet, $T \subseteq V$ is the terminal alphabet, $\mu$ is a membrane structure
with $m$ membranes labelled in a one-to-one manner with $1, 2, \dots ,m; I_1, \dots , I_m$ are finite sets of initial arrays over $V$ associated with the $m$ regions of $\mu;~ R_1, \dots , R_m$ are
finite sets of $\theta$-rotation rules over $V$ associated
with the $m$ regions of $\mu$; $\rho_1, \dots , \rho_m$ are partial order relations over $R_1, \dots , R_m$. The rules in $R_i$ are of the form $A\rightarrow \alpha^{\theta}~ (tar)$, or $\alpha^{\theta}\rightarrow \beta^{\theta}~ (tar)$, $2 \leq \left|\alpha\right|\leq \left|\beta\right|$, where $tar$ indicates the target location of the output array obtained by applying such rules. The $tar$ can be $here$, $out$ or $in$. Here $A \in (V\setminus T)$, $\alpha$ contains exactly one non-terminal symbol and all other symbols in $\alpha$ are terminals, $\beta \in V^{+}$ and $\theta\in \Big{\{}0, \frac{\pi}{4}, \frac{\pi}{2}, \frac{3\pi}{4}, \pi, \frac{5\pi}{4}, \frac{3\pi}{2}, \frac{7\pi}{4} \Big{\}}$. There can be more than one rule with $A$ or $\alpha$ on its left hand side. The array produced by using this rule will go to the membrane indicated by $tar$; $i_o$ is the output membrane.
\end{definition}

\par We start from an initial configuration of the system and proceed iteratively, by transition steps performed by using the $\theta$-rotation rules in parallel, to all arrays that can be rewritten, obeying the priority relations, and collecting the terminal arrays thus generated in a designated output membrane.

\par Note that each array is processed by one rule only, the parallelism refers here to processing simultaneously all available arrays by all applicable $\theta$-rotation rules. If several rules can be applied to an array, may be in several places, then we identify only one possible location to apply a possible rule. It is important to have in mind the fact that the evolution of the arrays is not independent of each other, but interrelated in two ways : (1) if we have priorities, a rule $r_1$ applicable to an array $\mathcal{A}$ can forbid the use of another rule, $r_2$, for rewriting another array, $\mathcal{B}$, which is present at that time in the same membrane. In the next step if $r_1$ is not applicable to $\mathcal{B}$ or to the array $\mathcal{A^{'}}$ obtained from $\mathcal{A}$ by using $r_1$, then it is possible that the rule $r_2$ can now be applied to $\mathcal{B}$; (2) even without priorities, if an array  $\mathcal{A}$ can be rewritten for ever, in the same membrane or on an itinerary through several membranes, and if this cannot be avoided, then all arrays are lost, because the computation never stops. The arrays collected in the output membrane are then lost.
\par A computation is successful only if it halts, a configuration is reached where no rule can be
applied to the existing arrays. The result of a halting computation consists of the arrays composed only of symbols from T (terminal symbols) placed in the membrane with label $i_o$.

\section{Labelled $8dAPS$}
In this section we introduce labelled $8dAPS$. We illustrate the model with a few interesting examples which halt always on digitized pictures. An exactly labelled $8dAPS$ is an 8-directional array P system where every rule is labelled either with a symbol or '$\lambda$'. The assignment of labels to rules need not be unique. The array rules are applied in a parallel distributed manner. The arrays are derived by any computation of a P system and such derivations of the system are labelled by strings over the labels. As the system being exactly labelled, the system is defined with a set of output arrays to which a derivation is built. The labels of the rules  applied are concatenated in sequence with the application of the rules. Such a collection of strings will be called as label language.

\begin{definition} A Labelled 8-directional Array P System ($L8dAPS$) $\Pi$ of degree $m (\geq1)$ is a construct
 
\begin{center}
$\Pi = (V, T, \mu, I_1, \dots , I_m, (R_1, \rho_1), \dots , (R_m, \rho_m), i_o, lab)$
\end{center}
where  $V, T, \mu, I_1, \dots , I_m, R_1, \dots ,R_m, \rho_1, \dots , \rho_m$, $i_o$ are same as in Definition \ref{def4}, $lab$ is a finite set of alphabet, which is used for labelling the rules. Let  $R = \bigcup_{i=1}^{m}R_i$. Here we assign a label to
every rule in $R$ where the labels are chosen from the finite
alphabet $lab$ or the labels can be $\lambda$ (empty label). Define a
function $f : R \rightarrow lab \cup \{\lambda\}$ called
a labelling  function that assign a label to each rule in $R$. Noting that more than one rule may have the same label, but the same rule in different membranes cannot be assigned different labels. We
extend the labelling for a label sequence $S = r_1~ r_2~ \dots~
r_k \in R^{*}$ as follows : $f(r_1~ r_2~ \dots~ r_k) = f(r_1) f(r_2 \dots r_k)$, where $r_i$ represents a rule in $R$. A transition
$C \stackrel{b}\Rightarrow C^{'}$ between two successive
configurations uses only rules with the same label $b$ and rules
labelled with $\lambda$. If at least one rule has a label $b\in
lab$ then the transition is called $\lambda$-restricted
transition. If we allow all rules with $\lambda$ label then the transition is called
$\lambda$-unrestricted transition (or $\lambda$-transition).
\end{definition}

\par A label string of input symbols (over $lab$) is said to be
generated if all its symbols are consumed while $\Pi$ transits from an initial configuration to a halting configuration. The output arrays in such a halting configuration are shown in $F$. The set of all label strings generated in this way by computations in a $L8dAPS$ $\Pi$ is
denoted by $L_{\lambda}8dAP(\Pi)$. The subscript indicates the
fact that $\lambda$-steps (all rules applied in one step can have
$\lambda$ label) are permitted. When only steps where at least one
rule with a non-empty label is used, the generating language is
denoted by $L8dAP(\Pi)$. The family of languages $L8dAP(\Pi)$
associated with $L8dAPS$ with at most $m$ membranes is denoted by $L8dAP_m$. In the
unrestricted case, the corresponding language family is denoted by
$L_{\lambda}8dAP_m$. If the number of membranes is unbounded,
then the subscript $m$ is replaced with $\star$.

\par We now illustrate the system with some interesting examples. In example \ref{ex1} the labelling language is regular where as in example \ref{ex2} the label language is context-free. From these examples one can observe that the halting configuration set can contain both rectangular and non rectangular arrays.

\begin{example}\label{ex1} Consider the array language $L_{star}$ consisting of the star shaped arrays over $\big{\{}x \big{\}}$ with each of the star having eight arms of equal length, (i.e.,)  

$L_{star} = \Bigg\{ \begin{array}{ccccc}
                                 x~     &         &  x~  &         & x~  \\
                                        & x~      &  x~  &   x~    &    \\
                                 x~     & x~      &  x~  &   x~    & x~  \\
                                        & x~      &  x~  &   x~    &    \\
                                 x~     &         &  x~  &         & x~

 \end{array}~~,~~\begin{array}{ccccccc}
 
                        x~      &         &         &  x~    &         &       &  x~   \\
                                &  x~     &         &  x~    &         &  x~   &         \\
                                &         & x~      &  x~    & x~      &       &         \\
                        x~      &  x~     & x~      &  x~    & x~      &  x~   &  x~      \\
                                &         & x~      &  x~    & x~      &       &         \\
                                &  x~     &         &  x~    &         &  x~   &         \\
                        x~      &         &         &  x~    &         &       &  x~

 \end{array},\dots \Bigg\}$. \\
 
\noindent The $LIAPS$ $\Pi_1$ with $three$ membranes is given as, 


\noindent Let $\Pi_1 = \Big(\Big{\{}A, B, C, D, E, F, G, H, A_1, B_1, C_1, D_1, E_1, F_1, G_1, H_1, x \Big{\}},~ \Big{\{}x \Big{ \}},~[_1[_2]_2$\\$~~~~~~~~~~~~~~~~[_3]_3]_1, ~I_1, I_2,~ I_3,~ R_1,~ R_2,~ R_3,~ 3, \Big{\{}a \Big{ \}}\Big)$, \\ where  $I_1 =\Bigg\{ \begin{array}{ccc}
                                            
                                        D      &  C  &   B  \\
                                        E      &  a  &   A  \\
                                        F      &  G  &   H

 \end{array} \Bigg\}$, $I_2 = I_3 = \emptyset$ and $F = \Bigg{\{}(\phi$, $\phi$, $\mathcal{A})$ $\big{|} \mathcal{A}\in L_{star} \Bigg{\}}$.

\noindent The set of $\theta$-rotation rules are given by \\

\noindent $R_1 = \Bigg\{~\Bigg\{
~1)~a:\begin{array}{c}
A
\end{array} \rightarrow \begin{array}{c}
(xA_1)^{0}
\end{array},~~2)~a:\begin{array}{c}
B
\end{array} \rightarrow \begin{array}{c}
(xB_1)^{\frac{\pi}{4}}
\end{array},~~3)~a:\begin{array}{c}
C
\end{array} \rightarrow \begin{array}{c}
(xC_1)^{\frac{\pi}{2}}
\end{array},\\~~~~~~~~~~~~~~4)~a:\begin{array}{c}
D
\end{array} \rightarrow \begin{array}{c}
(xD_1)^{\frac{3\pi}{4}}
\end{array},~~5)~a:\begin{array}{c}
E
\end{array} \rightarrow \begin{array}{c}
(xE_1)^{\pi}
\end{array},~~6)~a:\begin{array}{c}
F
\end{array} \rightarrow \begin{array}{c}
(xF_1)^{\frac{5\pi}{4}}
\end{array},\\~~~~~~~~~~~~~~7)~a:\begin{array}{c}
G
\end{array} \rightarrow \begin{array}{c}
(xG_1)^{\frac{3\pi}{2}}
\end{array}~~\Bigg\}~>~\Bigg\{~8)~a:\begin{array}{c}
H
\end{array} \rightarrow \begin{array}{c}
(xH_1)^{\frac{7\pi}{4}}
\end{array},in_2~,\\~~~~~~~~~~~~~~9)~a:\begin{array}{c}
H
\end{array} \rightarrow \begin{array}{c}
(xx)^{\frac{7\pi}{4}}
\end{array},in_3~\Bigg\}\Bigg\}$

\noindent $R_2 = \Bigg\{~\Bigg\{
10)~a:\begin{array}{c}
A_1
\end{array} \rightarrow \begin{array}{c}
(A)^{0}
\end{array},~~~~11)~a:\begin{array}{c}
B_1
\end{array} \rightarrow \begin{array}{c}
(B)^{\frac{\pi}{4}}
\end{array},~~12)~a:\begin{array}{c}
C_1
\end{array} \rightarrow \begin{array}{c}
(C)^{\frac{\pi}{2}}
\end{array},\\~~~~~~~~~~~~~~13)~a:\begin{array}{c}
D_1
\end{array} \rightarrow \begin{array}{c}
(D)^{\frac{3\pi}{4}}
\end{array},~~14)~a:\begin{array}{c}
E_1
\end{array} \rightarrow \begin{array}{c}
(E)^{\pi}
\end{array},~~15)~a:\begin{array}{c}
F_1
\end{array} \rightarrow \begin{array}{c}
(F)^{\frac{5\pi}{4}}
\end{array},\\~~~~~~~~~~~~~~16)~a:\begin{array}{c}
G_1
\end{array} \rightarrow \begin{array}{c}
(G)^{\frac{3\pi}{2}}
\end{array}~~\Bigg\}~>~17)~a:\begin{array}{c}
H_1
\end{array} \rightarrow \begin{array}{c}
(H)^{\frac{7\pi}{4}}
\end{array},out~\Bigg\}$

\noindent $R_3 = \Bigg\{
18)~a:\begin{array}{c}
A_1
\end{array} \rightarrow \begin{array}{c}
(x)^{0}
\end{array},~~~~19)~a:\begin{array}{c}
B_1
\end{array} \rightarrow \begin{array}{c}
(x)^{\frac{\pi}{4}}
\end{array},~~20)~a:\begin{array}{c}
C_1
\end{array} \rightarrow \begin{array}{c}
(x)^{\frac{\pi}{2}}
\end{array},\\~~~~~~~~~~~21)~a:\begin{array}{c}
D_1
\end{array} \rightarrow \begin{array}{c}
(x)^{\frac{3\pi}{4}}
\end{array},~~22)~a:\begin{array}{c}
E_1
\end{array} \rightarrow \begin{array}{c}
(x)^{\pi}
\end{array},~~23)~a:\begin{array}{c}
F_1
\end{array} \rightarrow \begin{array}{c}
(x)^{\frac{5\pi}{4}}
\end{array},\\~~~~~~~~~~~24)~a:\begin{array}{c}
G_1
\end{array} \rightarrow \begin{array}{c}
(x)^{\frac{3\pi}{2}}
\end{array}~~\Bigg\}$. \\

\noindent $L8dAPS$ $\Pi_1$ generating an exact label language $L8dAP(\Pi_1) = \Big{\{}a^{16n}~ \big{|}~ n \geq 1 \Big{\}}$, while it halts on an array that belongs to the set of picture configurations in $F$. The working of $L8dAPS$ $\Pi_1$ with $three$ membrane is as follows:\\

\par The initial array in membrane $one$ contains the array  

\begin{center}
$\begin{array}{ccc}
                                            
                                        D      &  C  &   B  \\
                                        E      &  x  &   A  \\
                                        F      &  G  &   H

 \end{array}$.
\end{center} 
  Applying rules 1, 2, 3, 4, 5, 6, 7  in membrane 1 and applying one of the lower priority rules say rule 8,the resulting array moves to region 2. In region 2, the variables $A_1$, $B_1$, $C_1$, $D_1$, $E_1$, $F_1$, $G_1$ are renamed as $A$, $B$, $C$, $D$, $E$, $F$, $G$ by applying the rules 10, 11, 12, 13, 14, 15, 16 and finally the lower priority rule 17 is applied to rename the variable $H_1$ by $H$. The resultant array is sent back to membrane 1. The process can be repeated to generate the array of eight arms of equal length  over ${\{}x{\}}$ in membrane 3. To halt the computation  rule 9 is applied instead of rule 8 in membrane 1 and the resultant array is  sent to membrane 3. In  membrane 3,  rules 18 to 24 are applied.
  
\end{example}

\begin{example}\label{ex2}
In this example, $L8dAPS$ $\Pi_2$ halts on a $swastik$ pattern $L_{s}$ of each arm length $\geq$ 3 in membrane 5, and it generates the exact label language which is context-free in nature. The $L8dAPS$ with 5 membranes is given as,

\noindent $\Pi_2 = \Big(\Big{\{}A, B, C, D, E, A_1, B_1, C_1, D_1, E_1, U, X, Y, Z, 0 \Big{\}},~ \Big{\{}0 \Big{ \}},~[_1[_2]_2[_3[_4]_4[_5]_5]_3]_1,\\~~~~~~~~~~~~I_1, I_2,~ I_3, I_4, I_5,~ R_1,~ R_2,~ R_3,~R_4,~R_5, 5, \Big{\{}a, b \Big{ \}}\Big)$,  where \\ $I_1 = \Bigg\{ \begin{array}{ccc}
                                            
                                               &  A  &     \\
                                          D    &  0  &   B  \\
                                               &  C  &

 \end{array} \Bigg\}$, $I_2 = I_3 = I_4 = I_5 = \emptyset$ and $F = \Bigg{\{}(\phi$, $\phi$, $\phi$, $\phi$, $\mathcal{B})$ $\big{|} \mathcal{B} \in L_{s} \Bigg{\}}$.

\noindent The set of $\theta$-rotation rules are given by \\

\noindent $R_1 = \Bigg\{~\Bigg\{
~1)~a:\begin{array}{c}
A
\end{array} \rightarrow \begin{array}{c}
(XA_1)^{\frac{\pi}{2}}
\end{array},~~2)~a:\begin{array}{c}
B
\end{array} \rightarrow \begin{array}{c}
(YB_1)^{0}
\end{array},~~3)~a:\begin{array}{c}
C
\end{array} \rightarrow \begin{array}{c}
(ZC_1)^{\frac{3\pi}{2}}
\end{array}~\Bigg\}~\\~~~~~~>~\Bigg\{~4)~a:\begin{array}{c}
D
\end{array} \rightarrow \begin{array}{c}
(UD_1)^{\pi}
\end{array},in_2~,~5)~a:\begin{array}{c}
D
\end{array} \rightarrow \begin{array}{c}
(UE_1)^{\pi}
\end{array},in_3~\Bigg\}\Bigg\}$

\noindent $R_2 = \Bigg\{~\Bigg\{
~6)~a:\begin{array}{c}
A_1
\end{array} \rightarrow \begin{array}{c}
(A)^{0}
\end{array},~~7)~a:\begin{array}{c}
B_1
\end{array} \rightarrow \begin{array}{c}
(B)^{0}
\end{array},~~8)~a:\begin{array}{c}
C_1
\end{array} \rightarrow \begin{array}{c}
(C)^{0}
\end{array}~\Bigg\}~>~\\~~~~~~~~~~~~~~9)~a:\begin{array}{c}
D_1
\end{array} \rightarrow \begin{array}{c}
(D)^{0}
\end{array},out~\Bigg\}$

\noindent $R_3 = \Bigg\{
~10)~b:\begin{array}{c}
A_1
\end{array} \rightarrow \begin{array}{c}
(0A_1)^{\frac{\pi}{2}}
\end{array},in_4~~>~~11)~b:\begin{array}{c}
B_1
\end{array} \rightarrow \begin{array}{c}
(0B_1)^{0}
\end{array},in_4~~>~~\\~~~~~~~~~~~12)~b:\begin{array}{c}
C_1
\end{array} \rightarrow \begin{array}{c}
(0C_1)^{\frac{3\pi}{2}}
\end{array},in_4~~>~~13)~b:\begin{array}{c}
E_1
\end{array} \rightarrow \begin{array}{c}
(0E_1)^{\pi}
\end{array},in_4~\Bigg\}$

\noindent $R_4 = \Bigg\{~\Bigg\{
14)~b:\begin{array}{c}
X
\end{array} \rightarrow \begin{array}{c}
(0)^{0}
\end{array},out~~>~~15)~b:\begin{array}{c}
A_1
\end{array} \rightarrow \begin{array}{c}
(A)^{0}
\end{array},out~~>~~\\~~~~~~~~~~~~~16)~b:\begin{array}{c}
Y
\end{array} \rightarrow \begin{array}{c}
(0)^{0}
\end{array},out~~>~~17)~b:\begin{array}{c}
B_1
\end{array} \rightarrow \begin{array}{c}
(B)^{0}
\end{array},out~~>~~$

$\\~~~~~~~~~~~~~18)~b:\begin{array}{c}
Z
\end{array} \rightarrow \begin{array}{c}
(0)^{0}
\end{array},out~~>~~19)~b:\begin{array}{c}
C_1
\end{array} \rightarrow \begin{array}{c}
(C)^{0}
\end{array},out~~>~~$

$\\~~~~~~~~~~~~~20)~b:\begin{array}{c}
U
\end{array} \rightarrow \begin{array}{c}
(0)^{0}
\end{array},out~~>~~21)~b:\begin{array}{c}
E_1
\end{array} \rightarrow \begin{array}{c}
(E)^{0}
\end{array},in_5\Bigg\}$

\noindent $R_5 = \Bigg\{
22)~b:\begin{array}{c}
A
\end{array} \rightarrow \begin{array}{c}
(0)^{0}
\end{array},~23)~b:\begin{array}{c}
B
\end{array} \rightarrow \begin{array}{c}
(0)^{0}
\end{array},~24)~b:\begin{array}{c}
C
\end{array} \rightarrow \begin{array}{c}
(0)^{0}
\end{array}~,~\\~~~~~~~~~~25)~b:\begin{array}{c}
E
\end{array} \rightarrow \begin{array}{c}
(0)^{0}
\end{array}\Bigg\}$.

%
%
%
%
%
 
\noindent The label language generated by $\Pi_2$ is $L8dAP(\Pi_2) = \Big{\{}a^{8n+4}b^{8n+20} ~~\big{|}~~ n \geq 0 \Big{\}}$. For each label string $a^4b^{20}, a^{12}b^{28}\dots\dots$,  $\Pi_2$ generates the $swastik$ pattern of each arm length $3, 4, \dots$ respectively whihc are in $L_{s}$.

$L_{s} = \Bigg\{ 
\begin{array}{ccccc}
                                    0   &        &  0  &   0 & 0  \\
                                    0   &        &  0  &     &    \\
                                    ~0~   &  ~0~     &  ~0~  &   ~0~ & 0  \\
                                        &        &  0  &     & 0   \\
                                    0   &  0     &  0  &     & 0

 \end{array}~,~\begin{array}{ccccccc}
                              0 &       &        &  0  &  0  & 0  & 0       \\
                              0 &       &        &  0  &     &    &        \\
                              0 &       &        &  0  &     &    &        \\
                              ~0~ &   0~   &  0~     &  ~0~  &   0~ & 0~  & 0     \\
                                &       &        &  0  &     &    & 0       \\
                                &       &        &  0  &     &    & 0       \\
                              0 &   0   &  0     &  0  &     &    & 0

 \end{array}~,~\dots\dots \Bigg\}$

\noindent The working of $L8APS$ $\Pi_2$ with five membranes is as follows: Starting with an initial array
\begin{center}

$\begin{array}{ccc}
                                            
                                               &  A  &     \\
                         \mathcal{Z} =              D    &  0  &   B  \\
                                               &  C  &

\end{array}$
\end{center}

\noindent in membrane 1. To generate a $swastik$ pattern of each arm length $n+3$, we first apply rules 1, 2, 3 and 4 in membrane 1 to $\mathcal{Z}$. The resulting array moves to membrane 2. Rules 6, 7, 8 and 9 are applicable now and the array moves to membrane 1. The process can be repeated $n$ times and the resulting array is as follows:
\begin{center}
$\begin{array}{ccccccccccccc}
    &    &     &  ~  &         &         &  A~    &         &       &  ~   &    &    &    \\
    &    &     &  ~  &         &         &  .~    &         &       &  ~   &    &    &    \\
    &    &     &  ~  &         &         &  .~    &         &       &  ~   &    &    &    \\
    &    &     &  ~  &         &         &  .~    &         &       &  ~   &    &    &     \\
    &    &     &     &  ~      &         &  X~    &         &  ~    &      &    &    &      \\
    &    &     &     &         & ~       &  X~    & ~       &       &      &    &    &      \\           
  D &  .~ &  .~  &  .~  &  U~     & U~      &  0~    & Y~      &  Y~   &  ~.   &  ~. &  ~. &  B   \\
    &    &     &     &         & ~       &  Z~    & ~       &       &      &    &    &     \\
    &    &     &     &  ~      &         &  Z~    &         &  ~    &      &    &    &      \\
    &    &     &     &         &         &  .~    &         &       &  ~   &    &    &      \\
    &    &     &    ~&         &         &  .~    &         &       &  ~   &    &    &      \\
    &    &     &    ~&         &         &  .~    &         &       &  ~   &    &    &      \\
    &    &     &   ~ &         &         &  C~    &         &       &  ~   &    &    &   
                                  
 \end{array}$
\end{center}

\noindent  Now, apply rules 1,2, 3 and 5 once in membrane 1 and expel the array to membrane 3, the label string so far will be $a^{8n+4}$ . Upper arm of a resulting array consists of a $0$ followed by $(n+1)$ $X$'s followed by $A_1$. Other arms consists of $n+1$ copies $Y$'s, $Z$'s and $U$'s in place of $X$'s followed $B_1$, $C_1$ and $E_1$ respectively.

\par To replace each $(n+1)$ $X$'s by $0$'s, apply rule 10 in region 3 and rule 14 in region 4 alternatively for $n+1$ times. After replacing all $X$'s, apply rule 10 in region 3 and rule 15 in region 4 once. The corresponding label string will be $a^{8n+4}b^{2(n+1)+2}$. Similarly, using rules 11, 12 and 13 in region 3 and rules 16, 17, 18, 19, 20 and 21 in region 4 all $Y$'s, $Z$'s and $U$'s in the array can be replaced by $0$'s. The array is expelled to membrane 5. The corresponding label string is $a^{8n+4}b^{8(n+1)+8}$. In membrane 5, apply the only possible rules 22, 23, 24 and 25 once and halt the computation. The resulting label string becomes $a^{8n+4}b^{8(n+1)+8+4} = a^{8n+4}b^{8n+20}$. The $swastik$ pattern of each arm length $(n+3)$ is obtained in the output membrane 5.
\end{example}

\section{Main Results}

In the labelled array P system that we defined in section 3, the label language is a language over the labels such that for every string in the label language, there corresponds a halting computation of the P system that halts in one of the predefined final configurations. We also mentioned that the label can be $\lambda$. We now compare the label languages thus obtained with Chomsky hierarchy. \\

\noindent\textbf{Notation:} For any family $\mathcal{P}$ of languages, $\mathcal{P} \setminus \{ \lambda \}$ means the family of $\lambda$-free languages.

\begin{theorem} $(REG \setminus \{\lambda \}) \subseteq L8dAP_{1}$
\label{th2}
\end{theorem}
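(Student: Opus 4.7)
The plan is to simulate an arbitrary $\lambda$-free right-linear grammar by a one-membrane $L8dAPS$ whose label language coincides with the language of the grammar. Given $L \in REG \setminus \{\lambda\}$, I would fix a right-linear grammar $G = (N, T, P, S)$ in normal form, with productions only of the shapes $A \to aB$ and $A \to a$ where $A, B \in N$ and $a \in T$. Such a normal form exists exactly because $L$ does not contain the empty word.

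Next I would define $\Pi_L$ as a one-membrane $L8dAPS$ with total alphabet $V = N \cup \{x\}$, terminal alphabet $T' = \{x\}$, membrane structure $[_1 ]_1$, initial array $I_1 = \{S\}$ (a single cell carrying $S$), output membrane $1$, label alphabet $lab = T$, and empty priority relation. The rules in $R_1$ are read off from $P$: for each production $A \to aB$ add a rule $a : A \to (xB)^{0}$, and for each terminating production $A \to a$ add a rule $a : A \to (x)^{0}$. Finally set $F$ to be the set of horizontal strips $x^n$ for $n \geq 1$. All rules meet the form of Definition \ref{def4}, and because $lab$ contains no $\lambda$, every transition emits a non-empty label.

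The correctness argument proceeds by induction on the number of transition steps. I would maintain the invariant that after $k$ steps, the unique array in membrane $1$ is either $x^{k-1}A$ for some non-terminal $A$, with $S \Rightarrow^{*} wA$ in $G$ where $w$ is the label string emitted so far, or $x^{k}$ with $S \Rightarrow^{*} w$ and the computation halted. The step case is immediate: when the array has the shape $x^{k-1}A$, exactly the rules with left-hand side $A$ are applicable, each one simulates one production of $G$ and emits the matching label; when the array has the shape $x^{k}$ no rule is applicable and the computation has halted. The converse direction, that every successful derivation of $G$ is realised by a halting computation of $\Pi_L$, follows by the same correspondence. Hence $L8dAP(\Pi_L) = L(G) = L$.

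The only mildly delicate point, which I would flag rather than belabour, is the syntactic interpretation of the rules $A \to (xB)^{0}$ and $A \to (x)^{0}$: the leftmost symbol of the right-hand side must be placed at the current position of $A$, with any cells east of $A$ shifted appropriately. Because the non-terminal $A$ always sits at the rightmost position of the array throughout the simulation, no cells ever need to be shifted and the rewriting is unambiguous, so the parallelism clause of the semantics is vacuous here. There is no substantial obstacle; the proof is essentially a clean syntactic simulation, and the only nuance is verifying that halting of $\Pi_L$ coincides exactly with termination of the simulated derivation, which it does since a purely terminal array in $\{x\}^{+}$ admits no applicable rule.
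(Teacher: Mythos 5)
Your proposal is correct and follows essentially the same route as the paper: a direct one-membrane simulation of a right-linear grammar in normal form, where each production $A \rightarrow aB$ (resp.\ $A \rightarrow a$) becomes a rule labelled $a$ that deposits one terminal pixel and keeps the sole non-terminal at the growing end of the strip, so that halting computations emit exactly the words of $L$ as label strings. The only differences are inessential: the paper grows the array in the $\frac{\pi}{4}$ direction rather than horizontally, and it first eliminates self-loop productions $A \rightarrow aA$ via new non-terminals, a preprocessing step your invariant argument shows is not actually needed.
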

\begin{proof} Let $G = (N, T, P, S)$ be a $\lambda$-free regular grammar and let it generate the language $L$. We assume that each production in $P$ is of the form $A \rightarrow aB$, $A \neq B$ or $A \rightarrow a$, where $A$, $B \in N$, $a \in T$. Suppose the production is of the form $A \rightarrow aA$, to eliminate recursion, we replace $A \rightarrow aA$ with the set of new productions $A \rightarrow aA^{'}$ and $A^{'} \rightarrow aA$ by introducing a new non-terminal $A^{'}$ and also add productions $A^{'} \rightarrow cB$ for each production $A \rightarrow cB \in P$. So, we now obtain a new grammar $G^{'} = (N^{'}, T, P^{'}, S)$ from $G$. It is obvious that both the grammars $G$ and $G^{'}$ generate the same language (i.e.,) $L(G)= L(G^{'})$. Let $m$ be the number of non-terminals in $G^{'}$. Now, rename the non-terminals in $G^{'}$ as $A_i$ , $1\leq i \leq m$, such that $A_1 = S$ and also modify the productions with this renamed non-terminals. Now, we construct a $L8dAPS$ $\Pi_4$ with one membrane such that $L(G^{'}) = L8dAPS(\Pi_4)$ as follows: \\
\begin{center}
$\Pi_4$ = $\Big( N \cup \big{\{}\star\big{\}}, \big{\{}\star\big{\}},[_1]_1,\big{\{}A_1\big{\}}, R_1, 1, T \Big)$ 
\end{center}
\noindent where  \\

$R_1 = \Big{\{} a: A_i \rightarrow
(\star~A_j)^{\frac{\pi}{4}}~\big{|}~ A_i \rightarrow aA_j \in P^{'}\Big{\}}~\cup \Big{\{} a: A_i \rightarrow
(\star)^{\frac{\pi}{4}}~\big{|}~ A_i \rightarrow a \in P^{'}\Big{\}}$,  $F = \Big{\{}(\mathcal{D}) \big{|} \mathcal{D} \in L_{\star} \Big{\}}$ \\

\noindent The $L8dAPS$ $\Pi_4$ constructed above works as follows: Initially, the system contains a single symbol $A_1$, the start symbol of $G^{'}$. When the system chooses the rule $A_i \rightarrow (\star~A_j)^{\frac{\pi}{4}}$ which corresponds to the rule $A_i \rightarrow aA_j \in P^{'}$ and generate $\star$ in the direction of 45$^{\circ}$ and a rule with label $a$ is considered in the label string. Repeated application of such rules generates $\star$ in the direction of 45$^{\circ}$. When the system chooses the rule $A_i \rightarrow
(\star)^{\frac{\pi}{4}}$ which corresponds to the rule $A_i \rightarrow a \in P^{'}$ the computation halts. The system then generates the last $\star$ in the output array. The output array corresponding to the strings of length $1, 2, 3, \dots$ generated by $L8dAPS$ $\Pi_4$ are as in $L_{\star}$.
\begin{center}

$L_{\star} = \Bigg\{\begin{array}{cccccccccccccc}  
             &       &        &          &         &        &        &       &           
             &       &        &          &         & \star~   \\ 
                                  
             &       &        &          &         &        &        &\star~      &        
             &       &        &          & \star &          \\
             
             &       &        & \star~   &         &        & \star~ &       &         
             &       & \star~ &          &         &    \\
            
     \star~  &  ,~   & \star~ &          &  ,~     & \star~ &        &        &    
      ,~     &\star~ &        &          &         & ,~ \dots
\end{array}\Bigg\}$
\end{center}

\end{proof} 

\begin{remark} One can show that the context-free language $L = \Big\{ a{^n}b^{n} : n\geq 1 \Big\}$ can be a label language of $8dAPS$. Consider the following $L8dAPS$ $\Pi_5$, \\
\begin{center}
$\Pi_5 = \Big(\Big{\{}A, B, 0, \star \Big{\}}, \Big{\{} 0, \star\Big{ \}}, [_1[_2]_2]_1, I_1, I_2, R_1, R_2, 2,~\Big{\{} a, b \Big{\}}, F\Big)$
\end{center}
\noindent where ~~$R_1 =\Bigg{\{}1)~a: \begin{array}{c}
A
\end{array}\rightarrow \begin{array}{c}
(BA)^{0}
\end{array},~2)~a: \begin{array}{c}
A
\end{array}\rightarrow \begin{array}{c}
(B)^{0}
\end{array},in_2\Bigg{\}}$,\\

~$~~~~~R_2 =\Bigg{\{}3)~b: \begin{array}{c}
B
\end{array}\rightarrow \begin{array}{c}
(\star)^{0}
\end{array}\Bigg{\}}$, \\

~~~~~~~$I_1 = \Big{\{}A \Big{ \}}$, $I_2 = \phi$ and $F = \Big{\{}(\phi$, ${\star}^n)~ \big{|}~ n \geq 1  \Big{\}}$.

\end{remark}

\noindent Hence we can deduce the following:

\begin{proposition} $L8dAP_{\star}\setminus REG \neq \emptyset$
\label{pro1}
\end{proposition}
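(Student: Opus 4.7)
The plan is to deduce the proposition directly from the $L8dAPS$ $\Pi_5$ constructed in the remark immediately preceding the statement. All the substantive construction is already done there; what remains is to verify carefully that $L8dAP(\Pi_5) = \{a^{n}b^{n} : n \geq 1\}$ and then invoke the classical non-regularity of that language.

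First I would analyse the dynamics of $\Pi_5$. The initial configuration has the single-symbol array $A$ in membrane $1$ and the empty set in membrane $2$. In membrane $1$ only two rules are available: rule $1$ (label $a$) which rewrites $A$ to $BA$ in the $0$-direction and keeps the array in membrane $1$, and rule $2$ (label $a$) which rewrites $A$ to $B$ and sends the array to membrane $2$. Thus every computation in membrane $1$ consists of some number $k \geq 0$ of applications of rule $1$ producing the array $B^{k}A$, followed by exactly one application of rule $2$ producing $B^{k+1}$ inside membrane $2$. During this phase the label string accumulated is $a^{k+1}$. In membrane $2$ the only applicable rule is rule $3$ (label $b$), which replaces one $B$ by $\star$ in the $0$-direction; since each array is rewritten by one rule per step, after $k+1$ further steps the array becomes $\star^{k+1}$, no rule is applicable anywhere, and the computation halts. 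The configuration reached is $(\phi,\star^{k+1})$, which lies in $F$ for every $k \geq 0$. Setting $n = k+1$, the label string of this successful computation is $a^{n}b^{n}$, and because the choices above are the only admissible ones, this characterises $L8dAP(\Pi_5)$ exactly.

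Next I would note that $\{a^{n}b^{n} : n \geq 1\}$ is a standard non-regular language, which can be verified by the pumping lemma: any pumping decomposition would force a word of the form $a^{n+i}b^{n}$ with $i>0$ into the language, a contradiction. Combining this with the previous paragraph gives $\{a^{n}b^{n} : n \geq 1\} \in L8dAP_{2} \subseteq L8dAP_{\star}$ while $\{a^{n}b^{n} : n \geq 1\} \notin REG$, and the proposition follows.

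There is no real obstacle here, but the one step requiring care is the bookkeeping of the label string in the membrane $2$ phase: one must use that each array is processed by at most one rule per derivation step, so that the $k+1$ copies of $B$ are replaced sequentially and each contributes a separate $b$ to the label string, rather than all being replaced in parallel (which would spuriously shorten the $b$-block). Once this is spelled out, the equality $L8dAP(\Pi_5) = \{a^{n}b^{n} : n \geq 1\}$ is immediate and the conclusion is a one-line observation.
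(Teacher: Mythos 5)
Your proposal is correct and follows exactly the paper's route: the proposition is deduced from the $L8dAPS$ $\Pi_5$ of the preceding remark, whose label language is $\{a^{n}b^{n} : n \geq 1\}$, together with the standard non-regularity of that language. Your additional bookkeeping of the one-rule-per-array-per-step semantics in membrane~2 merely spells out what the paper leaves implicit, so the argument is the same in substance.
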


\begin{remark} We conclude from Proposition \ref{pro1} and Theorem \ref{th1} that $(REG\setminus \{ \lambda \})  \subset L8dAP_{\star}$. We proceed further to see whether $(CF \setminus \{\lambda \})  \subseteq L8dAP_{\star}$ which we prove in the following theorem.
\end{remark}

\begin{theorem} $(CF \setminus \{ \lambda \})\subseteq L8dAP_{1}$ \end{theorem}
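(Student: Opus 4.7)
The plan is to lift the strategy of Theorem~\ref{th2} for regular grammars by encoding a pushdown stack vertically in the array, with a fixed terminal symbol $\#$ marking the top of the stack. First, I would convert the $\lambda$-free context-free language $L$ to a Greibach Normal Form grammar $G = (N, T, P, S)$, so that every production has the shape $A \to a B_{1} \cdots B_{k}$ with $a \in T$, $k \geq 0$, and each $B_{i} \in N$. Then I build a one-membrane $L8dAPS$ $\Pi$ with total alphabet $N \cup \{\#\}$, terminal alphabet $\{\#\}$, label alphabet $T$, and initial array the two-cell vertical column carrying $\#$ at $(0,1)$ directly above $S$ at $(0,0)$.

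All rules are of the context form $\alpha^{\theta}\to\beta^{\theta}$ and use the downward direction $\theta = \frac{3\pi}{2}$. For each production $A \to a B_{1}\cdots B_{k}$ with $k \geq 1$ I include $a:(\#A)^{\frac{3\pi}{2}} \to (\#B_{1}\cdots B_{k})^{\frac{3\pi}{2}}$, and for each terminal production $A \to a$ the rule $a:(\#A)^{\frac{3\pi}{2}} \to (\#\#)^{\frac{3\pi}{2}}$ (the duplicated $\#$ is included solely to satisfy $|\beta| \geq |\alpha| = 2$). By the shift semantics governing such rules, the cells below the matched $A$ are pushed down by $|\beta|-|\alpha|$ positions, freeing exactly the room needed to insert $B_{1},\ldots,B_{k}$ directly beneath the $\#$ marker and above the pre-existing stack contents.

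The heart of the proof is an invariant, verified by induction on the step count, that at every moment the column contains exactly one cell where $\#$ sits immediately above a non-terminal, and this non-terminal is precisely the top-of-stack symbol of the leftmost GNF derivation being simulated. For a non-terminal production only the freshly-written $B_{1}$ inherits a $\#$ above it, while $B_{2}, \ldots, B_{k}$ are capped above by non-terminals; for a terminal production the $A$ is overwritten with $\#$, which automatically plays the marker role for whichever symbol now sits beneath. Because every rule's left-hand side matches only the pattern ``$\#$ above a non-terminal,'' deeper non-terminals are never eligible targets, so the system is forced into leftmost derivations and cannot introduce spurious permutations of the terminal order.

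The main point to verify will be the equality $L8dAP(\Pi) = L$: one direction shows that every halting computation of $\Pi$ projects to a complete leftmost GNF derivation whose label word therefore lies in $L$, and the converse realises each word of $L=L(G)$ by stepping through its leftmost derivation in $\Pi$. Because every rule of $\Pi$ carries a non-$\lambda$ label from $T$, every admissible transition is $\lambda$-restricted, so the resulting computations are already witnessed in $L8dAP(\Pi)$ rather than only in $L_{\lambda}8dAP(\Pi)$, giving $L \in L8dAP_{1}$ as required.
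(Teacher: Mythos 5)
Your proof is correct, but it takes a genuinely different route from the paper. The paper also starts from a Greibach normal form grammar, but it translates each production $A_i \rightarrow ay$ directly into a non-context rule $a: A_i \rightarrow (\star y)^{0}$ (and $A_i \rightarrow a$ into $a: A_i \rightarrow (\star)^{0}$), so the sentential form is laid out horizontally with $\star$'s in place of emitted terminals, and the correctness of the label language rests on the claim that at each step the \emph{leftmost} non-terminal is the one rewritten (``the non-terminal next to $\star$ is preferred''). Since the semantics only says that one occurrence is chosen nondeterministically, that discipline is asserted rather than enforced; if a non-leftmost occurrence were rewritten, the temporal order of labels could differ from the generated word (e.g.\ with $S \rightarrow aBC$, $B \rightarrow b$, $C \rightarrow c$ one could record $acb$), which is exactly the over-generation issue your construction eliminates. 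By keeping the GNF stack as a vertical column and using context rules $(\#A)^{\frac{3\pi}{2}} \rightarrow (\#\beta)^{\frac{3\pi}{2}}$ whose left-hand side can only match the unique cell where the terminal marker $\#$ sits above a non-terminal, you make the leftmost simulation a consequence of the rule format: your invariant guarantees exactly one applicable occurrence per configuration, so $L8dAP(\Pi) \subseteq L$ needs no appeal to a preferred location, and the shift semantics (symbols below $\alpha$ move down by $|\beta|-|\alpha|$) indeed implements the push correctly; the rule format constraints ($|\alpha|=2\le|\beta|$, one non-terminal plus terminals in $\alpha$) are all respected. The price is a slightly heavier construction (context rules and the padding rule $\#A \rightarrow \#\#$) versus the paper's simpler single-non-terminal rules; what it buys is a watertight argument for the hard inclusion. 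Two small items you should still make explicit: the set $F$ of final configurations (the all-$\#$ vertical columns, of length $|w|+1$), and the observation that when the stack empties no rule is applicable, so the computation halts on a terminal array, which is what puts the label word into $L8dAP(\Pi)$.
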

\begin{proof} Let $L$ be a context-free language. Then let $G = (N, T, P, S)$ be a context-free grammar in $Greibach$ normal form generating $L$.  Let $n$ be the number of non-terminals in $N$. Now rename the non-terminals in $N$ as $A_i$, $1 \leq i \leq n$, such that $A_1 = S$ and also modify the rules with this renamed non-terminals. Let $G_1 = (N^{'}, T, P^{'}, A_1)$ be the grammar thus modified. We construct a $L8dAPS$ $\Pi_6$  with one membrane such that $L(G_1) = L8dAPS (\Pi_6)$  as follows: \\
$\Pi_6 = \Big( N^{'}\cup\{\star\}, \{ \star \}, [_1 ]_1, I_1, R_1, 1, T , F \Big)$, where $I_1 = \{A_1\}$ and \\
$R_1= \Bigg{\{}\begin{array}{c}
a:A_i \rightarrow \begin{array}{c}
(\star y)^{0}
\end{array}
\end{array}
:{A_i \rightarrow ay }\in P{'}\Bigg{\}} \cup \Bigg{\{}\begin{array}{c}
a:A_i \rightarrow \begin{array}{c}
(\star)^{0}
\end{array}
\end{array}
: {A_i \rightarrow a }\in P{'}\Bigg{\}}$

\noindent The arrays generated by $L8dAPS$ $\Pi_6$ are in $F  = \Big{\{}( \star \star \dots \star \star) : $ number of $\star's = \left| w \right|, w \in L8dAP(\Pi_6) \Big{\}}$.\\

\noindent Initially, the $L8dAPS$ $\Pi_6$ starts with an axiom $A_1$. Now, apply either a rule
$\begin{array}{c}
a:A_1 \rightarrow \begin{array}{c}
(\star y)^{0}
\end{array}
\end{array}$ which corresponds to $A_1 \rightarrow ay$ or the rule
$\begin{array}{c}
a:A_1 \rightarrow \begin{array}{c}
(\star)^{0}
\end{array}
\end{array}$ that corresponds to $A_1 \rightarrow a$. If we apply the latter rule, then the system halts on one of the final configurations in $F$. The corresponding label string generated by $\Pi_6$ is $a$. Suppose, we choose the former rule, $A_1$ is replaced with $\star y$ in the $0$ degree direction, $y$ is a string of non-terminals. We adopt the same procedure to the leftmost non-terminal in the array. Once we choose the rule $a :A_i \rightarrow (\star)^{0}$, then the leftmost non-terminal in the array is replaced by $\star$. The leftmost symbol in the array is not a non-terminal, so the non-terminal next to $\star$ is preferred. Again for this non-terminal we have two possibilities, either we can apply the rule $\begin{array}{c}
a:A_i \rightarrow \begin{array}{c}
(\star y)^{0}
\end{array}
\end{array}$ or the rule
$\begin{array}{c}
a:A_i \rightarrow \begin{array}{c}
(\star)^{0}
\end{array}
\end{array}$. Proceeding in this way, and applying the only possible rule $A_i \rightarrow (\star)^{0}$ to rewrite remaining non-terminals in the array as $\star$'s, the system halts on the final configuration $\Big{\{} \star \star \dots \star \star : $ number of $\star's = \left| w \right|, w \in L8dAP(\Pi_6) \Big{\}}$. Note that what ever $w$ may be, the halting array is of the form $\overbrace{\star \star\dots \star\star}^n$ with $\left| w \right| = n$. The label string is obtained by consuming a label of the rule in each step.
\end{proof}

\begin{remark} The context-sensitive language $\{ a{^n}b^{n}c^{n} : n\geq 1 \}$ can be generated by a $L8dAPS$ which gives the following proposition.
\end{remark}

\begin{proposition} $L8dAP_{\star}\setminus CF \neq \emptyset$ \end{proposition}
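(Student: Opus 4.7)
The plan is to prove the proposition by exhibiting an explicit $L8dAPS$ whose label language equals the classical non-context-free language $L = \{a^n b^n c^n : n \geq 1\}$. Its non-membership in $CF$ is standard (via the pumping lemma for context-free languages), so the work is concentrated on the construction and its verification. I would build a three-membrane system in the spirit of $\Pi_5$ from the preceding remark, extended by one more rewriting phase, and I would assign the labels $a$, $b$, $c$ to the rules in membranes $1$, $2$, $3$ respectively.

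Membrane $1$ starts from a single non-terminal $A$ with two rules labelled $a$: an iterating rule $A \to (BA)^{0}$ kept \emph{here} which lays down a row of copies of an auxiliary symbol $B$, and a terminating rule $A \to (D)^{0}$ with target $in_{2}$ that installs a marker $D$ and expels the array to membrane $2$. After $n$ steps the output is $B^{n-1}D$ in membrane $2$ together with the prefix $a^{n}$ in the label string. In membrane $2$ I place two rules labelled $b$ under a priority ordering: the higher-priority rule $b: B \to (C)^{0}$ rewrites a single $B$ to a neutral symbol $C$ at the same cell and stays here, while the lower-priority rule $b: D \to (E)^{0}$ with target $in_{3}$ consumes the marker and expels the array; the priority forces the expelling rule to wait until every $B$ has been converted, so $n-1$ bulk steps plus one marker step yield $b^{n}$ and deliver $C^{n-1}E$ to membrane $3$. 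A symmetric priority pair in membrane $3$ labelled $c$, namely $c: C \to (\star)^{0}$ (higher) and $c: E \to (\star)^{0}$ (lower), rewrites each $C$ to $\star$ and finally $E$ to $\star$, producing $\star^{n}$ and emitting $c^{n}$. The set $F$ would consist of the halting configurations of the form $(\phi,\phi,\star^{n})$.

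The main subtlety I anticipate is the off-by-one bookkeeping: each step that changes membranes itself contributes one label symbol, so the bulk rewriting phases must generate one fewer symbol than the target count, with the missing symbol supplied by the expelling step. The marker symbols $D$ and $E$ together with the priority relations accomplish exactly this alignment, generalising the marker trick already used in $\Pi_{5}$ for $\{a^{n}b^{n}\}$. Once the construction is fixed, completeness is immediate from the canonical derivation described above. For soundness I would observe that there is always exactly one array in the system and that the labels $a$, $b$, $c$ are partitioned across the three membranes, so the label of a step identifies the membrane currently hosting the array; the priorities and targets then force the phases to occur in the order $a$-then-$b$-then-$c$ with matching counts, which pins down the label language as $L$ and gives $L8dAP_{\star} \setminus CF \neq \emptyset$.
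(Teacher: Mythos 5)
Your proposal is correct and takes essentially the same route as the paper: the paper obtains the proposition from the (unproved) remark that $\{a^{n}b^{n}c^{n} : n \geq 1\}$ is a label language of some $L8dAPS$, and you exhibit exactly such a witness language. Your three-membrane construction with marker symbols, priorities, and the off-by-one bookkeeping handled by the expelling steps supplies the explicit system that the paper's remark only asserts, and it does yield the label language $\{a^{n}b^{n}c^{n} : n \geq 1\}$, which is not context-free.
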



\begin{theorem} $CS \setminus L8dAP_{\star} \neq \emptyset$ 
\end{theorem}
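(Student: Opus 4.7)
The plan is to exhibit a concrete context-sensitive language that cannot appear as the label language of any $L8dAPS$. A convenient candidate is the unary language $L_{0} = \{ a^{2^{n}} : n \geq 1 \}$, which is well-known to be context-sensitive (it is recognised by a deterministic linear-bounded automaton that repeatedly halves the input) while its Parikh image $\{ 2^{n} : n \geq 1 \}$ is not semilinear.

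To establish $L_{0} \notin L8dAP_{\star}$, I would prove the auxiliary claim that the Parikh image of every label language $L8dAP(\Pi)$ is semilinear. The strategy is to associate with each halting computation of $\Pi$ a derivation in a sequential regulated-rewriting device (a matrix grammar or an ETOL-like system) whose terminal yield is exactly the label string. Concretely, I would: (i) represent a configuration of $\Pi$ by the multiset of arrays distributed across its $m$ membranes together with enough finite information to decide rule applicability; (ii) over the labelling alphabet $lab$, construct a grammar whose productions simulate one parallel step of $\Pi$ by emitting the common label of that step as a single terminal symbol and updating the symbolic configuration accordingly; and (iii) extract the language of labels of terminating derivations and apply Parikh's theorem to obtain semilinearity. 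Since $L_{0}$ is non-semilinear, it cannot equal $L8dAP(\Pi)$ for any $\Pi$, and thus witnesses $CS \setminus L8dAP_{\star} \neq \emptyset$.

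The main obstacle will be controlling the maximal-parallelism semantics across arrays, the priority relations on rule sets, and the $in/out$ targets of the membrane hierarchy simultaneously. Priority in particular can in some P-system settings add appearance-checking-style power that breaks semilinearity, so the delicate point is to argue that, when one projects only onto the common label per step, the priority-driven selections of rule sets reduce to a finite-state choice over a bounded family of "applicable label classes" at the symbolic-configuration level. This is what should allow the simulation to stay within matrix grammars without appearance checking (hence semilinear).

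If a clean grammar simulation proves technically awkward, a fallback is a direct density argument tailored to $L_{0}$. Since each step of $\Pi$ adds at most a bounded number of cells to each array, and each array is touched by at most one rule per step, one can bound, for fixed $\Pi$, the number of distinct label strings of length at most $k$ in $L8dAP(\Pi)$ by a polynomial in $k$; then one compares this with the fact that $L_{0}$ contains exactly one word of each of infinitely many lengths with super-polynomial gaps $2^{n+1}-2^{n}$, and derives a contradiction with the uniform structure imposed on the label strings of $\Pi$ by the pumping behaviour of its configurations. Either route should yield $L_{0} \in CS \setminus L8dAP_{\star}$.
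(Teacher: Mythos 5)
There is a genuine gap. Your witness language is the same as the paper's ($\{a^{2^{n}}\}$), but your whole argument funnels through the auxiliary claim that the Parikh image of every $\lambda$-restricted label language $L8dAP(\Pi)$ is semilinear, and that claim is exactly what you have not proved --- and it is far from safe in this model. In a $\lambda$-restricted computation every transition contributes exactly one label symbol, so the length of a label string equals the number of steps of a halting computation; your lemma therefore asserts in particular that the set of halting times of any such system is semilinear. But a single array in this model can carry out a Turing-machine-style sweep: rules of the form $(q\,a\,c)^{\theta} \rightarrow (b\,p\,c)^{\theta}$, with one nonterminal (the ``state'') moving over terminal cells of an unboundedly growing array, are legal, and this is precisely the mechanism the paper itself exploits to prove $L_{\lambda}8dAP_{\star} = RE$. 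Nothing in your sketch excludes a system, all of whose rules are labelled $a$, that nondeterministically lays down $n$ marked cells and then runs a clocked sweep so that the halting computations reaching a final configuration have lengths forming a non-semilinear set. Moreover, priorities couple distinct arrays inside the same membrane (a rule applicable to one array can block a lower-priority rule on another), which is appearance-checking-like power; you explicitly notice this danger and then dispose of it by asserting that it ``reduces to a finite-state choice over applicable label classes,'' but deciding applicability of the context rules $\alpha^{\theta} \rightarrow \beta^{\theta}$ requires knowledge of unbounded array contents, so the proposed symbolic configuration cannot be finite-state without further argument. The reduction to matrix grammars without appearance checking is therefore unsupported, and with it the application of Parikh's theorem.

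The fallback does not rescue the argument: over a unary alphabet there are at most $k$ distinct strings of length at most $k$, so a polynomial bound on the number of label strings is vacuous, and the ``pumping behaviour of configurations'' you invoke is never formulated, let alone proved, for $L8dAPS$. For comparison, the paper does not go through semilinearity at all: it argues directly on the rule set, claiming that any recursive $a$-rule would produce halting label strings of lengths outside $\{2^{n}\}$, while non-recursive rules can contribute only finitely many lengths, so infinitely many rules would be needed --- an ad hoc counting argument on rules rather than a general structural lemma. If you wish to keep your route, the missing work is precisely a pumping or semilinearity theorem for $\lambda$-restricted label languages proved against the actual semantics of the model (one rule per array per step, priorities, targets, halting plus membership of the final configuration in $F$); as written, the decisive step is assumed rather than established, so the proposal does not yet prove the theorem.
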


\begin{proof} For the proof of this theorem we give a context-sensitive language which can not be a label language  of any $L8dAPS_{\star}$. Consider the context-sensitive language $L = \{ {a^{2}}^{n} ~\big{|}~  n \geq 0 \}$. Since $L$ is over one letter alphabet , all the rules in the $L8dAPS$ must be an $a$-rule and we cannot use  $\lambda$-label to any rule.  Let $(\alpha)^{\theta} \rightarrow \beta_k^{\theta}$ be a $\theta$-rotation rule such that $\alpha$ contains exactly one non-terminal (with zero or more number of terminals) and $\beta_k$ contains exactly $k$ non-terminals (with zero or more number of terminals). Suppose on the contrary, let, $L8dAPS$ $\Pi_7$ be a system with $m$ membranes that generates $L$ and halts on one of the final configurations. We show the non-existence of such a system only for $m = 1$. The argument for $m$ membrane $P$ system will be identical to this. The reason is that in both situations we need infinite number of rules in the membrane system to build $L$. Let $\mathcal{A}_1, \mathcal{A}_2,\dots,\mathcal{A}_n$ be the arrays in the initial configuration of $\Pi_7$. We recall that the successful halting computation means that the system must halt as well as the arrays remaining in the output membrane are terminal arrays (composed of only terminals).
\par In the following argument we actually look for rules in the membrane to build $L$ recursively.
\begin{enumerate}
\item In order to generate a label string $a$, whose length is one, the system must go up to one step (transition). Therefore, each array $\mathcal{A}_1, \mathcal{A}_2,\dots,\mathcal{A}_n$ in $\Pi_8$ must contain at most one non-terminal (no restriction on terminals). To reach the successful halting computation, we must apply one or more rules of type $a: \alpha^{\theta} \rightarrow \beta_0^{\theta}$. Note that we have introduced at least one new rule to generate the control string $a$.

\item By (1) above we know that each array $\mathcal{A}_1, \mathcal{A}_2,\dots,\mathcal{A}_n$ contains at most one non-terminal. To accept label string $a^2$, the system must go up to two steps (transitions). In order to do this, at least to one of the array, we need to apply the rule of type $a: \alpha^{\theta} \rightarrow \beta_1^{\theta}$, which is a new rule. This rule may be recursive or non-recursive . If the rule is recursive, then it also generates the label strings $a^3, a^5, a^6,$\dots$ \notin L$. Suppose, it is non-recursive, we can apply it once, followed by an existing rule of type $\alpha^{\theta} \rightarrow \beta_0^{\theta}$ to halt the computation. Hence, to  generate $a^2$, we have introduced a new rule of type $\alpha^{\theta} \rightarrow \beta_1^{\theta}$.

\item Similarly, in order to generate $a^4$, the system must go up to 4 steps. At least for one of the arrays we need to apply the rules in a way that there is no recursion. In all the possible cases, if any of the rule is recursive, it leads to the generation of a label string not in $L$.
Therefore, the only possibility is non-recursive rules. In all the cases, we can see at least one new rule is required to generate $a^4$.
\end{enumerate}

\par So, to generate each string in $L = \{ {a^{2}}^{n} ~\big{|}~ n \geq 0 \}$, we need to introduce at least one $a$-rule in each step. Since $L$ is infinite, the number of $a$-rules required to generate $L$ is also infinite.

\par Now we give the argumentative proof similar to the above to show that there does not exist any $L8dAPS$ to generate $L$. Suppose we assume that there is one such $L8dAPS$ $\Pi_7$ with $m$ membranes. If in any one of the $m$ membrane contain a recursive $a$-rule ,then it leads to an infinite loop or the system generates a string not in $L$. On the other hand, if the system contains only non recursive $a$-rules then, the number of such $a$-rules must be infinite. Hence the theorem.
\end{proof} 

\begin{theorem} $L8dAP_{\star} \subset CS$ \end{theorem}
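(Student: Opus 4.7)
The plan is to handle the strict inclusion $L8dAP_\star \subset CS$ in two pieces. The strictness is immediate from the preceding theorem, which exhibits the context-sensitive language $\{a^{2^n} : n \geq 0\}$ as a witness to $CS \setminus L8dAP_\star \neq \emptyset$, so once the inclusion is established the containment is automatically proper. For the inclusion $L8dAP_\star \subseteq CS$, the strategy is to simulate an arbitrary $L8dAPS$ $\Pi$ by a non-deterministic linear bounded automaton (LBA) whose accepted language is $L8dAP(\Pi)$.

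Two observations drive the simulation. First, in the $L8dAP$ semantics (as opposed to $L_\lambda 8dAP$) every transition uses at least one rule with a non-$\lambda$ label and the label of the transition is precisely that symbol; hence a successful derivation of a label string of length $n$ consists of exactly $n$ transitions. Second, since every $\theta$-rotation rule satisfies $|\beta| \geq |\alpha|$, each rule application grows its array by at most $c - 1$ cells, where $c = \max_{r} |\beta_r|$; no rule duplicates an array; and the targets $here$, $out$, $in$ only move existing arrays between regions. Writing $N_0$ for the initial number of arrays and $S_0$ for their total cell count, after $n$ transitions the number of occupied cells across the whole membrane structure is bounded by $S_0 + n \cdot N_0 \cdot (c-1) = O(n)$.

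Given input $w = w_1 w_2 \cdots w_n$, the LBA I will construct maintains the current configuration of $\Pi$ on its work tape and runs for exactly $n$ steps. At step $i$, it nondeterministically picks for each array currently in the system either an applicable rule (respecting the priority relation $\rho_j$ of that array's home membrane) or no rule at all, subject to the constraints that every non-$\lambda$ rule chosen carries the label $w_i$ and that at least one such rule is chosen. It then applies the chosen rules in parallel, relocates the resulting arrays to the regions dictated by their $tar$ targets, and continues. After the $n$-th step it verifies that the configuration is halting (no rule matches any array in its current membrane) and that the contents of the designated output membrane form an array tuple in $F$; if so it accepts. The main obstacle I anticipate is storing the configuration, a collection of two-dimensional arrays on the $45^\circ$ grid, in genuinely linear tape space rather than the naive $O(n \log n)$ one gets from writing each cell's absolute coordinates. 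The resolution is to exploit the fact that every rule attaches its new cells immediately adjacent to the rewritten position, so each array stays connected throughout the derivation and admits a chain-code or spanning-tree serialization whose length is proportional to its cell count; combined with a fixed-bit encoding of the symbol per cell and constant overhead per membrane, this keeps the whole configuration within linear space and yields a bona fide LBA, completing the inclusion.
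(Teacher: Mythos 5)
Your proposal is correct and follows essentially the same route as the paper: strictness is delegated to the preceding theorem, and the inclusion is shown by a non-deterministic LBA that simulates the $L8dAPS$ while the configuration size stays bounded linearly in the length of the label string (each $\lambda$-restricted transition emits exactly one label symbol and the non-contracting rules add only a bounded number of cells per step), which is exactly the paper's argument, though you are more explicit about the growth constant and the linear-space chain-code encoding than the paper's multi-track bookkeeping. The only point to tighten is the rule-selection step: the semantics processes every array to which an applicable rule (with label $w_i$ or $\lambda$) exists, so the LBA should enforce this maximality rather than allow an array to idle arbitrarily.
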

\begin{proof} We show how $L8dAPS$ will be recognized by a linear
bounded automaton. In order to do this, we simulate a computation
of a $L8dAPS$ by remembering the number of symbols in the arrays and their corresponding shapes after the generating each symbol in the label string. We then show that the total number of symbols in the arrays is bounded by the length of the label string.

 Consider a label language $L$ of a $L8dAPS$ $\Pi_8$ with $m$ membranes and let $p$ be the total number of rules in these $m$ membranes. Let $w = b_1b_2\dots b_l$, $l \geq 1$ be a label string in $L$. Let $\mathcal{A}_1$, $\mathcal{A}_2,\dots ,\mathcal{A}_n$ be the arrays in the $m$ membranes of $\Pi_8$ in the initial configuration. We build a multi-track non-deterministic $LBA$ $B$ which simulates $\Pi_8$. In order for $B$ to simulate $\Pi_8$, it has to keep track the symbols in the arrays and their shapes after generating each symbol in the label string. So $B$ has a track assigned to every rule of $\Pi_8$, a track for each pixel-symbol triple
$(X, (x,y), i)$ $\in  V \times \mathbb{Z}^{2} \times \{1, 2, \dots, n \}$ and a track for each triple $(X, \mathcal{A}_i, j)\in V \times\{\mathcal{A}_1, \mathcal{A}_2,\dots,\mathcal{A}_n\} \times\{0, 1, 2,\dots \}$.

\par The array $\mathcal{A}_i$ is plotted in the plane $Z=i$ as follows: one of the symbols of the array $\mathcal{A}_i$ is plotted at $((0, 0), i)$, the origin of the plane $Z=i$. Fix this symbol, and place the other symbols of the array as follows: a symbol left to it is plotted at $((-1, 0), i)$; a symbol right to it is plotted at $((1, 0), i)$; a symbol above to it is plotted at $((0, 1), i)$; a symbol below to it is plotted at $((0, -1), i)$; a symbol 45 degree angle to it, is plotted at $((1, 1), i)$; a symbol 135 degree angle to it, is plotted at $((-1, 1), i)$; a symbol 225 degree angle to it, is plotted at $((-1, -1), i)$; a symbol 315 degree angle to it, is plotted at $((1, -1), i)$. In general, if the fixed symbol is in the position $((x,y), i)$, then a symbol to its left, is plotted at $((x-1,y), i)$; a symbol right to it, is plotted at $((x+1, y), i)$; a symbol above to it, is plotted at $((x,y+1), i)$; a symbol below to it, is plotted at $((x,y-1), i)$; a symbol 45 degree angle to it, is plotted at $((x+1, y+1), i)$; a symbol 135 degree angle to it, is plotted at $((x-1, y+1), i)$; a symbol 225 degree angle to it, is plotted at $((x-1, y-1), i)$; a symbol 315 degree angle to it, is plotted at $((x+1, y-1), i)$. If any symbol of the array remains, then change the fixed position of the symbol and repeat the same procedure till all the symbols in the array are plotted.

\par $B$ keeps track of the configuration of $\Pi_8$ by writing a positive integer 1 on each track assigned to the symbol-pixel triple $(X, (x,y), i)$, the symbol $X$ being plotted in the pixel $(x, y)$ of the plane $Z=i$. And also writes a positive integer on each track assigned to the symbol-configuration triple $(X, \mathcal{A}_i, j)$, denoting the number of symbols $X$ in the array $\mathcal{A}_i$ at the configuration $j$. Then for each triple $(X, (x,y), i)$, $B$ examines the chosen rule set and plots the symbols $X$ in the pixel $(x, y)$ of the plane $Z=i$ by the procedure mentioned above, increasing the number on the track $(X, \mathcal{A}_i, j)$ .

\par We can see that in any step of the computation, the tracks contain integers bounded by the number of symbols in a label string of $\Pi_8$. The shapes of the arrays are also preserved.

\par The number of symbols in the arrays in any configuration $C$
during a computation step is bounded by $S(i)$, where $i$ is the
number of label symbols generated. Then the space used by $B$ to record
the configurations and to calculate the configuration change of
$\Pi_8$ is bounded by $t \times log_b (S(i))$, where $b$ denotes the
base of the track alphabet and $t$ denotes the number of tracks
used. Finally, $B$ checks the applicability of some more rules.
If not, and also if it reaches one of the final configurations, it generates the label string $w$, otherwise it rejects. So the number of symbols in the arrays present in the system is
bounded by the input length and hence the label language is
a context-sensitive language.
\end{proof} 

\begin{theorem} $L_{\lambda}8dAP_{\star} = RE$ \end{theorem}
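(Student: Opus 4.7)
The inclusion $L_{\lambda}8dAP_{\star} \subseteq RE$ is routine: given any $L8dAPS$ $\Pi$, a Turing machine can enumerate all finite sequences of rule applications in $\Pi$, retain those that start from the initial configuration and terminate in a configuration of $F$, and emit the induced label string via the labelling function $f$. Hence $L_{\lambda}8dAP(\Pi)$ is always recursively enumerable.

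For the converse $RE \subseteq L_{\lambda}8dAP_{\star}$, the plan is to fix an arbitrary type-0 Chomsky grammar $G = (N, T, P, S)$ generating $L$ in Kuroda normal form, so that every production is one of $A \to a$, $A \to BC$, $A \to \lambda$, or $AB \to CD$ with $A, B, C, D \in N$ and $a \in T$. I would then build a $L8dAPS$ $\Pi$ whose arrays are horizontal words at orientation $\theta = 0$, starting from the single-symbol array $S$ in a ``work'' membrane, and simulate $G$-derivations step by step, with every simulation rule labelled $\lambda$ so that only the final read-off phase contributes to the label string.

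Context-free productions $A \to BC$ and $A \to a$ translate to single $\lambda$-labelled rules $A \to (BC)^{0}$ and $A \to (a)^{0}$. A $\lambda$-production $A \to \lambda$, which cannot be simulated directly because $\theta$-rotation rules are length-non-decreasing, is simulated by $A \to (e)^{0}$ for a fresh dummy terminal $e \notin T$ that will be skipped during read-off. The context-sensitive production $AB \to CD$ requires a two-step simulation, since a $\theta$-rotation rule admits at most one non-terminal on its left-hand side: in a membrane $i$, a $\lambda$-labelled rule $A \to (C\, e_{AB\to CD})^{0}$ fires, inserting the fresh terminal marker $e_{AB\to CD}$ between the freshly written $C$ and the symbol originally to the right of $A$; the array then travels via an $in$ target to membrane $i+1$, where the $\lambda$-labelled rule $(e_{AB\to CD}\,B)^{0} \to (e_{AB\to CD}\,D)^{0}$ completes the rewrite, this second rule having exactly one non-terminal ($B$) in its left-hand side as required. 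Priorities $\rho_i$ together with trap non-terminals in the neighbouring membranes guarantee that when the first step fires at a position where $A$ is not actually followed by $B$, the resulting configuration cannot halt. Once the sentential form is purely terminal (a mixture of clean terminals in $T$ and dummy markers), the array is expelled to a dedicated read-off membrane, where a walking marker non-terminal $\#$ sweeps left-to-right, firing for each clean $a \in T$ it meets a rule labelled $a$ and for each dummy marker a $\lambda$-labelled skip rule; upon reaching the right end of the array, $\#$ halts the computation on a configuration in $F$. The accumulated label string is then exactly the terminal word produced by the simulated derivation of $G$.

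The main obstacle is the two-step simulation of $AB \to CD$ together with the guarantee that no partial or mismatched simulation survives to a halting configuration; this will require a careful choice of priorities, target indicators, and trap non-terminals distributed across several membranes. Modulo that bookkeeping, correctness follows by induction on derivation length, and since only read-off rules carry non-$\lambda$ labels the label language of $\Pi$ coincides with $L(G) = L$, completing the proof.
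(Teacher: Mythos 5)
Your first inclusion is fine. The converse, however, has a genuine gap that is more than ``bookkeeping.'' Your two-step simulation of a Kuroda production $AB \to CD$ (and your replacement of $A \to \lambda$ by $A \to (e)^{0}$) permanently inserts a dummy terminal into the array: since every $\theta$-rotation rule satisfies $|\alpha| \le |\beta|$, these markers can never be erased, and after simulating $AB \to CD$ the array reads $\dots C\, e_{AB\to CD}\, D \dots$ rather than $\dots C D \dots$. Hence the array is no longer a faithful copy of the grammar's sentential form, and the next time the grammar applies a context-sensitive production to a pair of symbols that your simulation has separated (for instance $CD \to EF$ right after $AB \to CD$, or a production involving the two former neighbours of an erased $A$), your gadget cannot fire: its second rule needs its specific marker to sit immediately to the left of the second non-terminal, which it no longer does. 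So the induction on derivation length fails exactly at the step you defer; repairing it would need an extra mechanism (e.g.\ length-preserving shuttle rules such as $(eX)^{0} \to (Xe)^{0}$ moving markers past non-terminals, plus the priority/trap machinery that kills mismatched firings of the first half of the gadget) and a new invariant relating padded arrays to sentential forms, none of which is supplied.

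For comparison, the paper's proof sidesteps erasure and adjacency altogether: it encodes words over $\{0,1\}$, generates the label symbols $a_i$ with rules $(q_0\,0\,1)^{0} \to (0\,q_i\,1)^{0}$, and simulates a Turing machine recognizing the encoding by length-preserving, $\lambda$-labelled rules $(q\,a\,c)^{0} \to (b\,p\,c)^{0}$ and $(c\,q\,a)^{0} \to (p\,c\,b)^{0}$ in a single membrane, so halting of the machine is exactly halting of the P system. Simulating a machine rather than a Kuroda grammar is precisely what makes the one-non-terminal, length-non-decreasing rule format harmless; your grammar-based route could likely be completed, but only with the additional marker-shuttling and trap apparatus sketched above.
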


\begin{proof} The inclusion $L_{\lambda}8dAP_{\star} \subseteq RE$ follows from Church-Turing hypothesis.

\par For the proof of the inclusion $RE \subseteq L_{\lambda}8dAP_{\star}$, it is enough to prove that \\
$RE \subseteq L_{\lambda}8dAP_1$, since $L_{\lambda}8dAP_1 \subseteq L_{\lambda}8dAP_{\star}$.

\par Let $H = \{a_1, a_2,\dots,a_k\}$ and let $L \subseteq H^{*}$ be a recursively enumerable language. Let $e:
H \mapsto \{1^{1}, 1^{2}, 1^{3},\dots,1^{k}\}$ such that
$e(a_i) = 1^{i}$, $1 \leq i \leq k$. The encoding for a string $w
= a_ia_j\dots a_l$, for $a_i, a_j,\dots,a_l \in H$ is as follows:
$e(w) = 0e(a_i)0e(a_j)0\dots 0e(a_l)0$ .

\par For any $L$, there exists a Turing machine $M =(K, \{0,1\}, \Gamma,
\delta, q_0, F^{'})$ which halts after processing the input $i_0$
placed in its input tape if and only if $i_0 = e(w)$ for some $w
\in$ L. So it is sufficient to show the simulation of the encoding
$e(w)$, and the transitions of the Turing machine with a
$L8dAPS$ on $w$.

\begin{itemize}
  \item [$\bullet$] The transition $\delta(q, a) = (p, b, R)$ is simulated
  by the $\theta$-rotation rule \\
$\begin{array}{c}
(q~a~c)^0
\end{array}  \rightarrow \begin{array}{c}
(b~p~c)^0
\end{array}$~~, where $c$ is some non-blank symbol. \\

\item [$\bullet$] The transition $\delta(q, a) = (p, b, L)$ is simulated
  by the $\theta$-rotation rule\\
$\begin{array}{c}
(c~q~a)^0
\end{array}  \rightarrow \begin{array}{c}
(p~c~b)^0
\end{array}$,  where $c$ is some non-blank symbol.
\end{itemize}
We construct a $L8dAPS$ 
\begin{center}
$\Pi^{'} = \Big{(}V, T, [_1]_1, I_1, R_1, 1, H \Big{)}$
\end{center}

\noindent where $V = \Big{\{}q_0, q_1,\dots, q_k, 0, 1, x \Big{\}}$, \\
$~~~~~~~~~~T = \Big{\{} 0, 1, x \Big{\}}$, \\
$~~~~~~~~~~I_1 = \Big{\{}\begin{array}{c} 
q_0~0~e(a_i)~0~e(a_j)~0~\dots~0~e(a_l)~0
\end{array}\Big{\}}$, \\
$~~~~~~~~~~~R_1 = \Big{\{} a_i : \begin{array}{c}
(q_0~0~1)^0
\end{array}  \rightarrow \begin{array}{c}
(0~q_i~1)^0
\end{array}~:~ a_i \in H,~ {1 \leq i \leq k} \Big{\}} \cup$ the set of all $\theta$-rotation rules corresponding to the transitions of the Turing machine $M$ which are labelled with $\lambda$. 
The set of final configuration is  = $\Big{\{}\Big{(}e(w)x\Big{)}~:~w \in L \Big{\}}$. \\

\noindent The $L8dAPS^{a}$ $\Pi^{'}$  performs the following operations.
\begin{enumerate}
\item For $1 \leq i \leq k$, and the symbol $a_i \in H$, the rule
$\begin{array}{c}
(q_0~0~1)^0
\end{array}  \rightarrow \begin{array}{c}
(0~q_i~1)^0
\end{array}$, labelled with $a_i$ is used, which introduces the symbol $q_i$  and it is the symbol used in the first transition for generating the encoding in Step 2.
\item Perform the computation $e(au) = e(a)e(u)$, $u \in H^{+}$, $a \in H$. Assume that the encoding of $w$ is represented by encoding of each symbol of $u$ padded by $0$ on both ends. The simulation of $au$ is performed by the following sub-program. \\

    $\delta(q_i, 1) = (q_{i-1}, 1, R)$~,~~~~ $i = i,~ i-1,~ i-2,\dots,~ 3,~ 2$. \\

    $\delta(q_1, 1) = (q_0, 1, R)$ \\

\noindent The transitions of the sub-program can be simulated by the $\theta$-rotation rules as shown in the beginning of the proof, and these rules are assigned with label $\lambda$.
\item  Repeat the Steps 1 and 2 non-deterministically until the last symbol of the label string $w$ is consumed.
\item The output array that remains in the system is $\Big{\{}\Big{(}e(w)x\Big{)}~:~w \in L\Big{\}}$, which belongs to the set of final configuration. The array reduced in the system is equal to $e(w)x$ for some $w \in H^{+}$. We now start to simulate the working of the Turing machine $M$ for recognizing the string $e(w)$. If the Turing machine halts, by introducing the following transitions: \\

     $\delta(q_0, 0) = (0, x, R)$ \\

\noindent and if the corresponding $\theta$-rotation rule is labelled with $\lambda$, then $w \in L$. Otherwise the machine goes into an infinite loop.
    \end{enumerate}
So, we can see that the computation halts after generating a string $w$ if and only if $w \in L$.
\end{proof}

\section{Concluding Remarks}
In this paper we introduced a new array P system called labelled 8-directional array P system ($8dAPS$). The data structure `string' is interpreted as `turtle-like' graphs with a possibilities to turn in multiples of 45 degrees. This P system-based fractal description model can be used to construct several interesting pictures. In the labelled $8dAPS$ the rules are labelled and the evolution of the system yields a label language. The label language thus obtained is compared with Chomsky hierarchy. We can understand the halting nature of this P system by means of its `dependability'. By dependability we mean to study the halting nature or halting configurations of P system via string over the labels of the rules. We know that in our model every rule is labelled and strings over the set of labels lead the application of the rules. Such strings decide the strategy of movements in the parallel distributed computing model, P system. Hence the study becomes significant. The main difference between grammar rewriting system of describing  some space filling curves like `Koch curve' and our recursive 8-directional array P system is that, we do not re-scale the template. If we are able to record the shrinking effect, then our $8dAPS$ can generate almost all curves like `Koch curve', Peano curve etc. One can also extend the study to understand more about space filling curves which have important role in antenna designing.
One of the other future work can also looking at control languages of other variants of P-systems\cite{Gh.Paun-Rozenberg-Salomaa-2010} such as Tissue P-systems with arrays as data structures.
\section{Acknowledgements}
This work was funded by the SERB :SB/S4/MS-865/14, Department of Science and Technology, Government of India.

\end{document}